\renewcommand{\paragraph}[1]{\vspace{1mm}\noindent{\bf #1}}
\newcommand{\eat}[1]{}
\newtheorem{definition}{Definition}[section]
\newtheorem{lemma}[definition]{Lemma}
\newtheorem{theorem}[definition]{Theorem}
\newtheorem{example}[definition]{Example}
\begin{document}

\title{Vision Paper: Towards an Understanding of the Limits of Map-Reduce Computation}

\numberofauthors{1}
\author{Foto N. Afrati$^{\dag}$, Anish Das Sarma$^{\sharp}$, Semih Salihoglu$^{\ddag}$, Jeffrey D. Ullman$^{\ddag}$\\
\affaddr{\large $^{\dag}$ National Technical University of Athens,  $^{\sharp}$ Google Research, $^{\ddag}$ Stanford University}\\
\email{afrati@softlab.ece.ntua.gr, anish.dassarma@gmail.com, semih@cs.stanford.edu, ullman@gmail.com}}

\date{}

\maketitle

\section{Introduction}

A significant amount of recent research work has addressed the problem of solving various data management problems in the cloud. The major algorithmic challenges in map-reduce computations involve balancing a multitude of factors such as the number of machines available for mappers/reducers, their memory requirements, and {\em communication cost} (total amount of data sent from mappers to reducers). Most past work provides custom solutions to specific problems, e.g., performing fuzzy joins in map-reduce~\cite{ADMPU12, VCL10}, clustering~\cite{TTLKF11}, graph analyses~\cite{AFU12, SV11, Schank07}, and so on. While some problems are amenable to very efficient map-reduce algorithms, some other problems do not lend themselves to a natural distribution, and have provable lower bounds. Clearly, the ease of ``map-reducability'' is closely related to whether the problem can be partitioned into independent pieces, which are distributed across mappers/reducers. What makes a problem distributable? Can we characterize general properties of problems that determine how easy or hard it is to find efficient map-reduce algorithms?

This is a vision paper that attempts to answer the questions described above. We define and study {\em replication rate}.  Informally, the replication rate of any map-reduce algorithm gives the average number of reducers each input is sent to. There are many ways to implement nontrivial problems in a round of map-reduce; the more parallelism you want, the more overhead you face due to having to replicate inputs to many reducers.  In this paper:

\begin{itemize}

\item We offer a simple model of how inputs and outputs are related, enabling us to study the replication rate of problems.  We show how our model can capture a varied set of problems. (Section~\ref{model-sect})

\item We study two interesting problems---{\em Hamming Distance-1} (Section~\ref{hd1-sect}) and {\em triangle finding} (Section~\ref{other-results})---and show in each case
there is a lower bound on the replication rate that grows as the number of inputs
per reducer shrinks (and therefore as the parallelism grows).
Moreover, we present methods of mapping inputs to reducers that meet
these lower bounds for various values of  inputs/reducer. 

\end{itemize}

\noindent It is our long-term goal to understand how the structure of a problem, as reflected by the input-output relationship in our model, affects the degree of parallelism/replication tradeoff.

\section{The Model}
\label{model-sect}

The model looks simple~-- perhaps too simple.  But with it we can discover some quite interesting and realistic insights into the range of possible map-reduce algorithms for a problem.  For our purposes, a {\em problem} consists of:

\begin{enumerate}

\item
Sets of {\em inputs} and  {\em outputs}.

\item
A {\em mapping} from outputs to sets of inputs.  The intent is that each output depends on only the set of inputs it is mapped to.

\end{enumerate}

\noindent Note that our model essentially captures the notion {\em provenance}~\cite{provenance}. In our context, there are two nonobvious points about this model:

\begin{itemize}

\item
Inputs and outputs are hypothetical, in the sense that they are all the possible inputs or outputs that might be present in an instance of the problem.  Any {\em instance} of the problem will have a subset of the inputs.  We assume that an output is never made unless at least one of its inputs is present, and in many problems, we only want to make the output if {\em all} of its associated inputs are present.

\item
We need to limit ourselves to finite sets of inputs and outputs.  Thus, a finite domain or domains from which inputs and outputs are constructed is often an integral part of the problem statement, and a ``problem'' is really a family of problems, one for each choice of finite domain(s).

\end{itemize}
We hope a few examples will make these ideas clear.

\subsection{Examples of Problems}
\label{prob-ex-subsect}

\begin{example}
\label{join-ex}
Consider the natural join of relations $R(A,B)$ and $S(B,C)$.  The inputs are tuples in $R$ or $S$, and the outputs are tuples with schema $(A,B,C)$.  To make this problem finite, we need to assume finite domains for attributes $A$, $B$, and $C$; say there are $N_A$, $N_B$, and $N_C$ members of these domains, respectively.

Then there are $N_AN_BN_C$ outputs, each corresponding to a triple $(a,b,c)$.  This output is mapped to the set of two inputs.  One is the tuple $R(a,b)$ from relation $R$ and the other is the tuple $S(b,c)$ from relation $S$.  The number of inputs is $N_AN_B + N_BN_C$.

Notice that in an instance of the join problem, not all the inputs will be present.  That is, the relations $R$ and $S$ will be subsets of all the possible tuples, and the output will be those triples $(a,b,c)$ such that both $R(a,b)$ and $S(b,c)$ are actually present in the input instance.
\end{example}

\begin{example}
\label{triangle-ex}
For another example, consider finding tri\-angles.  We are given a graph as input and want to find all  triples of nodes such that in the graph there are edges between each pair of these three nodes.  To model this problem, we need to assume a domain for the nodes of the input graph with $N$ nodes.  An output is thus a set of three nodes, and an input is a set of two nodes.  The output $\{u,v,w\}$ is mapped to the set of three inputs $\{u,v\}$, $\{u,w\}$, and $\{v,w\}$.  Notice that, unlike the previous and next examples, here, an output is a set of more than two inputs.  In an instance of the triangles problem, some of the possible edges will be present, and the outputs produced will be those such that all three edges to which the output is mapped are present.
\end{example}

\begin{example}
\label{hd1-ex}
This example is a very simple case of a similarity join.  The inputs are binary strings, and since we have to make things finite, we shall assume that these strings have a fixed length b.  There are thus $2^b$ inputs.  The outputs are pairs of inputs that are at Hamming distance 1; that is, the inputs differ in exactly one bit.  There are thus $(b/2)2^b$ outputs, since each of the $2^b$ inputs is Hamming distance 1 from exactly $b$ other inputs~-- those that differ in exactly one of the $b$ bits.  However, that observation counts every pair of inputs at distance 1 twice, which is why we must divide by 2.
\end{example}

\begin{example}
\label{sum-ex}
Suppose we have a relation $R(A,B)$ and we want to implement  group-by-and-sum:

\begin{verbatim}
    SELECT A, SUM(B)
    FROM R
    GROUP BY A;
\end{verbatim}
We must assume finite domains for $A$ and $B$. An output is a value of $A$, say $a$, chosen from the finite domain of $A$-values, together with the sum of all the $B$-values.  This output is associated with a large set of inputs: all tuples with $A$-value $a$ and any $B$-value from the finite domain of $B$.  In any instance of this problem, we do not expect that all these tuples will be present, but as long as at least one of them is present, there will be an output for this value $a$.
\end{example}

\subsection{Mapping Schemas and Replication Rate}
\label{rr-subsect}

For many problems, there is a tradeoff between the number of reducers to which a given input must be sent and the number of inputs that can be sent to one reducer.  It can be argued that the existence of such a tradeoff is tantamount to the problem being ``not embarrassingly parallel''; that is, the more parallelism we introduce, the greater will be the total cost of computation.

The more reducers that receive a given input, the greater the communication cost for solving an instance of a problem using map-reduce.  As communication tends to be expensive, and in fact is often the dominant cost, we'd like to keep the number of reducers per input low.  However, there is also a good reason to want to keep the number of inputs per reducer low.  Doing so makes it likely that we can execute the Reduce task in main memory.  Also, the smaller the input to each reducer, the more parallelism there can be and the lower will be the wall-clock time for executing the map-reduce job (assuming there is an adequate number of compute-nodes to execute all the Reduce tasks in parallel).

In our discussion, we shall use the convention that $p$ is the number of reducers used to solve a given problem instance, and $q$ is the maximum number of inputs that can be sent to any one reducer.  We should understand that $q$ counts the number of potential inputs, regardless of which inputs are actually present for an instance of the problem.  However, on the assumption that inputs are chosen independently with fixed probability, we can expect the number of actual inputs at a reducer to be $q$ times that probability, and there is a vanishingly small chance of significant deviation for large $q$.  If we know the probability of an input being present in the data is $x$, and we can tolerate $q_1$ real inputs at a reducer, then we can use $q=q_1/x$ to account for the fact that not all inputs will actually be present.

With this motivation in mind, let us define a {\em mapping schema} for a given problem, with a given value of $q$, to be an assignment of a set of reducers to each input, subject to the constraints that:

\begin{enumerate}

\item
No more than $q$ inputs are assigned to any one reducer.

\item
For every output, its associated inputs are all assigned to one reducer.  We say the reducer {\em covers} the output.  This reducer need not be unique, and it is, of course, permitted that these same inputs are assigned also to other reducers.

\end{enumerate}

The figure of merit for a mapping schema is the {\em replication rate}, which we define to be the average number of reducers to which an input is mapped by that schema.  Suppose that for a certain algorithm, the $i$th reducer is assigned $q_i \le q$ inputs, and let $I$ be the number of different inputs.  Then the replication rate $r$ for this algorithm is
$$r = \sum_{i=1}^p q_i/I$$

We want to derive lower bounds on $r$, as a function of $q$, for various problems, thus demonstrating the tradeoff between high parallelism (many small reducers) and overhead (total communication cost~-- the replication rate).  These lower bounds depend on counting the total number of outputs that a reducer can cover if it is given at most $q$ inputs. We let $g(q)$ denote this number of outputs that a reducer with $q$ inputs can cover.

Observe that, no matter what random set of inputs is present for an instance of the problem, the expected communication is $r$ times the number of inputs actually present, so $r$ is a good measure of the communication cost incurred during an instance of the problem.  Further, the assumption that the mapping schema assigns inputs to processors without reference to what inputs are actually present captures the nature of a map-reduce computation.  Normally, a map function turns input objects into key-value pairs independently, without knowing what else is in the input.

\section{The Hamming-Distance-1 Problem}
\label{hd1-sect}

We are going to begin our development of the model with the tightest result we can offer.  For the problem of finding pairs of bit strings of length $b$ that are at Hamming distance 1, we have a lower bound on the replication rate $r$ as a function of $q$, the maximum number of inputs assigned to a reducer.  This bound is essentially best possible, as we shall point to a number of mapping schemas that solve the problem and have exactly the replication rate stated in the lower bound.

\subsection{Bounding the Number of Outputs}
\label{output-bound-subsect}

The key to the lower bound on replication rate as a function of $q$ is a tight upper bound on the number of outputs that can be covered by a reducer assigned $q$ inputs.
\begin{lemma}
\label{hd-outputs-lemma}
For the Hamming-distance-1 problem, a reducer that is assigned $q$ inputs can cover no more than $(q/2)\log_2q$ outputs.
\end{lemma}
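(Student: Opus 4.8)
The plan is to recast the question as a purely graph-theoretic one. Identify each input with a vertex of the $b$-dimensional hypercube $Q_b=\{0,1\}^b$, and observe that a reducer covers the output $\{x,y\}$ exactly when both $x$ and $y$ lie in its assigned set $S$ (with $|S|=q$) and $x,y$ differ in a single bit---that is, when $\{x,y\}$ is an edge of $Q_b$ with both endpoints in $S$. So $g(q)$ is precisely the maximum number of edges of $Q_b$ spanned by a set of $q$ vertices, and the claim becomes: every $q$-vertex induced subgraph of $Q_b$ has at most $(q/2)\log_2 q$ edges, equivalently, its average degree is at most $\log_2 q$.

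I would prove this by induction on the dimension $b$. Bisect $Q_b$ along the first coordinate into two copies of $Q_{b-1}$; let $S_0,S_1$ be the parts of $S$ whose first bit is $0$ and $1$, with $q_0=|S_0|$, $q_1=|S_1|$, so $q_0+q_1=q$. Every spanned edge is internal to $S_0$, internal to $S_1$, or a ``cross'' edge joining $0w\in S_0$ to $1w\in S_1$ for a shared suffix $w$. The cross edges form a partial matching between $S_0$ and $S_1$, so there are at most $\min(q_0,q_1)$ of them, while the induction hypothesis (in dimension $b-1$) bounds the internal edges by $(q_0/2)\log_2 q_0$ and $(q_1/2)\log_2 q_1$. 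The base case $b=0$, or simply $q\le 1$, is trivial since no edges exist.

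The crux is then the numerical inequality
\[
\tfrac{1}{2}q_0\log_2 q_0 + \tfrac{1}{2}q_1\log_2 q_1 + \min(q_0,q_1) \;\le\; \tfrac{1}{2}(q_0+q_1)\log_2(q_0+q_1).
\]
Writing $q_0=\alpha q$ and $q_1=(1-\alpha)q$ with $\alpha\le 1/2$ (so $\min=\alpha q$), the three $x\log_2 x$ terms collapse---the $\log_2 q$ pieces cancel---and the inequality reduces to $H(\alpha)\ge 2\alpha$ on $[0,1/2]$, where $H(\alpha)=-\alpha\log_2\alpha-(1-\alpha)\log_2(1-\alpha)$ is the binary entropy. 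Since $H$ is concave and agrees with the line $2\alpha$ at both endpoints $\alpha=0$ and $\alpha=1/2$ (where $H=0$ and $H=1$ respectively), concavity forces $H(\alpha)\ge 2\alpha$ throughout, closing the induction. I expect verifying this entropy inequality to be the main obstacle; everything else is bookkeeping.

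Finally I would confirm tightness, so the bound is seen to be best possible rather than merely convenient: when $q=2^k$, taking $S$ to be a $k$-dimensional subcube gives exactly $k\cdot 2^{k-1}=(q/2)\log_2 q$ spanned edges, matching the bound. This also explains why the extremal configuration in the induction is the balanced split $q_0=q_1=q/2$, the unique point of equality in the entropy inequality.
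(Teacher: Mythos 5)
Your proposal is correct and follows essentially the same route as the paper's proof: induction on the string length $b$, splitting the $q$ inputs by their first bit, bounding the cross pairs by $\min(q_0,q_1)$ via the observation that they form a partial matching, and reducing everything to the same numerical inequality. The only real difference is how that inequality is verified---you substitute $q_0=\alpha q$ and invoke concavity of the binary entropy to get $H(\alpha)\ge 2\alpha$ on $[0,1/2]$, whereas the paper checks equality at $q_0=q_1$ and compares derivatives with respect to the larger part; both are valid, and your closing subcube example showing tightness at $q=2^k$ is a correct bonus that the paper establishes only indirectly through its matching algorithms.
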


\begin{proof}
The proof is an induction on $b$, the length of the bit strings in the input.  The basis is $b=1$.  Here, there are only two strings, so $q$ is either 1 or 2.  If $q=1$, the reducer can cover no outputs.  But $(q/2)\log_2q$ is 0 when $q=1$, so the lemma holds in this case.  If $q=2$, the reducer can cover at most one output.  But $(q/2)\log_2q$ is 1 when $q=2$, so again the lemma holds.

Now let us assume the bound for $b$ and consider the case where the inputs consist of strings of length $b+1$.  Let $X$ be a set of $q$ bit strings of length $b+1$.  Let $Y$ be the subset of $X$ consisting of those strings that begin with 0, and let $Z$ be the remaining strings of $X$~-- those that begin with 1.  Suppose $Y$ and $Z$ have $y$ and $z$ members, respectively, so $q=y+z$.

An important observation is that for any string in $Y$, there is at most one string in $Z$ at Hamming distance 1.  That is, if $0w$ is in $Y$, it could be Hamming distance 1 from $1w$ in $Z$, if that string is indeed in $Z$, but there is no other string in $Z$ that could be at Hamming distance 1 from $0w$, since all strings in $Z$ start with 1.  Likewise, each string in $Z$ can be distance 1 from at most one string in $Y$.  Thus, the number of outputs with one string in $Y$ and the other in $Z$ is at most $\min(y,z)$.

So let's count the maximum number of outputs that can have their inputs within $X$.  By the inductive hypothesis, there are at most $(y/2)\log_2y$ outputs both of whose inputs are in $Y$, at most $(z/2)\log_2z$ outputs both of whose inputs are in $Z$, and, by the observation in the paragraph above, at most $\min(y,z)$ outputs with one input in each of $Y$ and $Z$.

Assume without loss of generality that $y\le z$.  Then the maximum number of strings of length $b+1$ that can be covered by a reducer with $q$ inputs is
$$\frac{y}{2} \log_2y + \frac{z}{2} \log_2z + y$$
We must show that this function is at most $(q/2)\log_2q$, or, since $q=y+z$, we need to show

\begin{equation}
\label{yz-eq}
\frac{y}{2} \log_2y + \frac{z}{2} \log_2z + y \le \frac{y+z}{2} \log_2(y+z)
\end{equation}
under the condition that $z\ge y$.

First, observe that when $y=z$, Equation~\ref{yz-eq} holds with equality.  That is, both sides become $y(\log_2y + 1)$.  Next, consider the derivatives, with respect to $z$, of the two sides of Equation~\ref{yz-eq}.  $d/dz$ of the left side is
$$\frac{1}{2}\log_2z + \frac{\log_2e}{2}$$
while the derivative of the right side is
$$\frac{1}{2}\log_2(y+z) + \frac{\log_2e}{2}$$
Since $z\ge y\ge0$, the derivative of the left side is always less than or equal to the derivative of the right side.  Thus, as $z$ grows larger than $y$, the left side remains no greater than the right.  That proves the induction step, and we may conclude the lemma.
\end{proof}

\subsection{The Tradeoff for Hamming Distance 1}
\label{hd-trade-subsect}

We can use Lemma~\ref{hd-outputs-lemma} to get a lower bound on the replication rate as a function of $q$, the maximum number of inputs at a reducer.

\begin{theorem}
\label{hd-trade-thm}
For the Hamming distance 1 problem with inputs of length $b$, the replication rate $r$ is at least $b/\log_2q$.
\end{theorem}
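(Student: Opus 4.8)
The plan is to translate the counting bound from Lemma~\ref{hd-outputs-lemma} into a lower bound on the replication rate via a global counting argument: each reducer can cover only a limited number of outputs, every output must be covered by some reducer, so enough inputs must be shipped in total, and this forces the replication rate up.

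First I would set up the accounting. Let there be $p$ reducers, with reducer $i$ receiving $q_i \le q$ inputs. By Lemma~\ref{hd-outputs-lemma}, reducer $i$ covers at most $(q_i/2)\log_2 q_i$ outputs, and since $q_i \le q$, the function $(q_i/2)\log_2 q_i$ is monotonic, so this is at most $(q_i/2)\log_2 q$. Every one of the total outputs must be covered by at least one reducer, so summing over all reducers gives
\begin{equation}
\label{cover-ineq-eq}
\sum_{i=1}^{p} \frac{q_i}{2}\log_2 q \ge \text{(total number of outputs)}.
\end{equation}

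Next I would plug in the two quantities computed in Example~\ref{hd1-ex}: the number of inputs is $I = 2^b$, and the number of outputs is $(b/2)2^b$. Pulling the constant $\frac{1}{2}\log_2 q$ out of the sum on the left of \eqref{cover-ineq-eq} leaves $\sum_i q_i$, which by the definition of replication rate in Section~\ref{rr-subsect} equals $rI = r\,2^b$. Substituting both sides, the inequality becomes $\frac{1}{2}(\log_2 q)\, r\, 2^b \ge \frac{b}{2}2^b$, and cancelling the common factor $\frac{1}{2}2^b$ yields $r\log_2 q \ge b$, i.e.\ $r \ge b/\log_2 q$, which is exactly the claimed bound.

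The only real subtlety, rather than a genuine obstacle, is justifying the step that replaces $(q_i/2)\log_2 q_i$ by $(q_i/2)\log_2 q$ — this relies on $q_i \le q$ and on $x\log_2 x$ being increasing for $x \ge 1$, which lets the per-reducer bound be expressed linearly in $q_i$ so that the replication-rate definition $r = \sum_i q_i / I$ can be substituted cleanly. Everything else is direct substitution of the input and output counts from Example~\ref{hd1-ex} and the counting lemma, so I expect the argument to go through in a few lines with no heavy calculation.
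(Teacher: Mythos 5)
Your proposal is correct and follows essentially the same route as the paper's own proof: bound each reducer's covered outputs by Lemma~\ref{hd-outputs-lemma}, relax $\log_2 q_i$ to $\log_2 q$ using $q_i \le q$, sum against the total output count $(b/2)2^b$, and substitute the definition $r = \sum_i q_i/2^b$ to obtain $r \ge b/\log_2 q$. Your explicit remark justifying the replacement of $\log_2 q_i$ by $\log_2 q$ matches the paper's parenthetical observation that $\log_2 q \ge \log_2 q_i$ for all $i$, so there is no substantive difference.
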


\begin{proof}
Suppose there are $p$ reducers, each with $\leq q$ inputs.  By Lemma~\ref{hd-outputs-lemma}, there are at most $(q_i/2)\log_2q_i$ outputs covered by reducer $i$.  

The total number of outputs, given that inputs are of length $b$ is $(b/2)2^b$.  Thus, since every output must be covered, and $log_2q \geq log_2q_i$ for all $i$, we have

\begin{equation}
\label{hd-trade-eq}
\sum_{i=1}^p\frac{q_i}{2}\log_2q_i \ge \frac{b}{2}2^b
\end{equation}
\begin{equation}
\label{hd-trade-eq}
\sum_{i=1}^p\frac{q_i}{2}\log_2q \ge \frac{b}{2}2^b
\end{equation}
The replication rate is $r = \sum_{i=1}^pq_i/2^b$, that is, the sum of the inputs at each reducer divided by the total number of inputs.  We can move factors in Equation~\ref{hd-trade-eq} to get a lower bound on $r = \sum_{i=1}^pq_i/2^b \ge b/\log_2q$, which is exactly the statement of the theorem.
\end{proof}

\subsection{Upper Bound for Hamming Distance 1}
\label{hd-upper-subsect}

There are a number of algorithms for finding pairs at Hamming distance 1 that match the lower bound of Theorem~\ref{hd-trade-thm}.  First, suppose $q=2$; that is, every reducer gets exactly 2 inputs, and is therefore responsible for exactly one output.  Theorem~\ref{hd-trade-thm} says the replication rate $r$ must be at least $b/\log_22 = b$.  But in this case, every input string $w$ of length $b$ must be sent to exactly $b$ reducers~-- the reducers corresponding to that input and the $b$ inputs that are Hamming distance 1 from $w$.

There is another simple case at the other extreme.  If $q = 2^b$, then we need only one reducer, which gets all the inputs.  In that case, $r=1$.  But Theorem~\ref{hd-trade-thm} says that $r$ must be at least $b/\log_2(2^b) = 1$.

In \cite{ADMPU12}, there is an algorithm called Splitting that, for the case of Hamming distance 1 uses $2^{1+b/2}$ reducers, for some even $b$.  Half of these reducers, or $2^{b/2}$ reducers correspond to the $2^{b/2}$ possible bit strings that may be the first half of an input string.  Call these {\em Group I reducers}. The second half of the reducers correspond to the $2^{b/2}$ bit strings that may be the second half of an input.   Call these {\em Group II reducers}.  Thus, each bit string of length $b/2$ corresponds to two different reducers.

An input $w$ of length $b$ is sent to 2 reducers: the Group-I reducer that corresponds to its first $b/2$ bits, and the Group-II reducer that corresponds to its last $b/2$ bits.  Thus, each input is assigned to two reducers, and the replication rate is 2.  That also matches the lower bound of $b/\log_2(2^{b/2}) = b/(b/2) = 2$.  It is easy to observe that every pair of inputs at distance 1 is sent to some reducer in common.  These inputs must either agree in the first half of their bits, in which case they are sent to the same Group-I reducer, or they agree on the last half of their bits, in which case they are sent to the same Group-II reducer.

We can generalize the Splitting Algorithm to give us an algorithm whose replication rate $r$ matches the lower bound, for any integer $r>2$.  We must assume that $r$ divides $b$ evenly.  Thus, strings of length $b$ can be split into $r$ pieces, each of length $b/r$.  We will have $r$ groups of reducers, numbered 1 through $r$.  In each group of reducers there is a reducer corresponding to each of the $2^{b-b/r}$ bit strings of length $b-b/r$.

To see how inputs are assigned to reducers, suppose $w$ is a bit string of length $b$.   Write $w = w_1w_2\cdots w_r$, where each $w_i$ is of length $b/r$.  We send $w$ to the group-$i$ reducer that corresponds to bit string $w_1\cdots w_{i-1}w_{i+1}\cdots w_r$, that is, $w$ with the $i$th substring $w_i$ removed.   Thus, each input is sent to $r$ reducers, one in each of the $r$ groups, and the replication rate is $r$.  The input size for each reducer is $q = 2^{b/r}$, so the lower bound says that the replication rate must be at least $b/\log_2(2^{b/r}) = b/(b/r) = r$.  That is the replication rate of our generalization of the Splitting algorithm is tight.

Finally, we need to argue that the mapping schema solves the problem.  Any two strings at Hamming distance 1 will disagree in only one of the $r$ segments of length $b/r$.  If they disagree in the $i$th segments, then they will be sent to the same Group~$i$ reducer, because reducers in this group ignore the value in the $i$th segment.  Thus, this reducer will cover the output consisting of this pair.

\begin{figure}
\centerline{\includegraphics[width=0.4\textwidth]{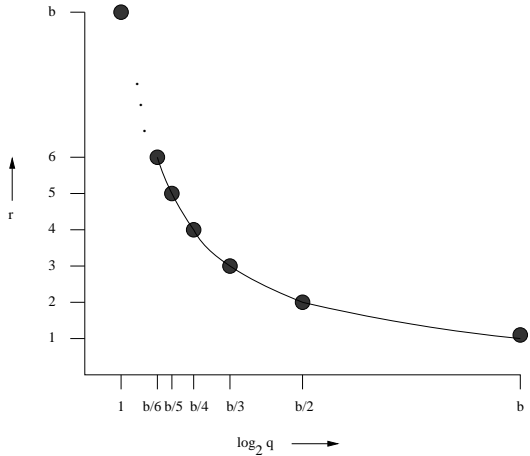}}
\caption{Known algorithms matching the lower bound on replication rate}
\label{tradeoff-fig}
\end{figure}

Figure~\ref{tradeoff-fig} illustrates what we know.  The hyperbola is the lower bound.  Known algorithms that match the lower bound on replication rate are shown with dots.

\pagebreak
\subsection{An Algorithm for Large {\it \large q}}
\label{weights-subsect}

There is a family of algorithms that use reducers with large input~-- $q$ well above $2^{b/2}$, but lower that $2^b$.   The simplest version of these algorithms divides bit strings of length $b$ into left and right halves of length $b/2$ and organizes them by weights, as suggested by Fig.~\ref{weights-fig}.  The {\em weight} of a bit string is the number of 1's in that string.  In detail, for some $k$, which we assume divides $b/2$, we partition the weights into $b/(2k)$ groups, each with $k$ consecutive weights.  Thus, the first group is weights 0 through $k-1$, the second is weights $k$ through $2k-1$, and so on.  The last group has an extra weight, $b/2$, and consists of weights $\frac{b}{2}-k$ through $b/2$.

\begin{figure}
\centerline{\includegraphics[width=0.4\textwidth]{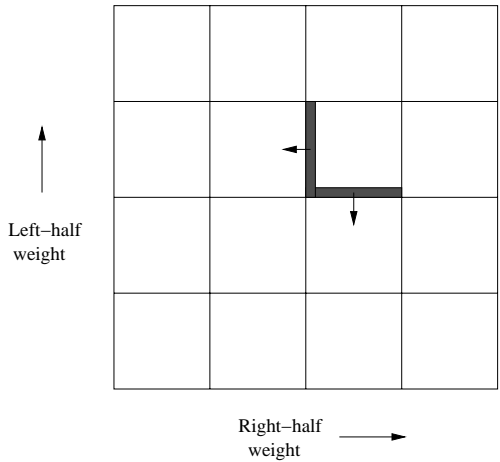}}
\caption{Partitioning by weight.  Only the border weights need to be replicated}
\label{weights-fig}
\end{figure}

There are $(\frac{b}{2k})^2$ reducers; each corresponds to a range of weights for the first half and a range of weights for the second half.  A string is assigned to reducer $(i,j)$, for $i,j=1,2,\ldots, b/2k$  if the left half of the string has weight in the range $(i-1)k$ through $ik-1$ and the  right half of the string has weight in the range $(j-1)k$ through $jk-1$.

Consider two bit strings $w_0$ and $w_1$ of length $b$ that differ in exactly one bit .  Suppose the bit in which they differ is in the left half, and suppose that $w_1$ has a 1 in that bit.  Finally, let $w_1$ be assigned to reducer $R$.  Then unless the weight of the left half of $w_1$ is the lowest weight for the left half that is assigned to reducer $R$, $w_0$ will also be at $R$, and therefore $R$ will cover the pair $\{w_0,w_1\}$.  However, if the weight of $w_1$ in its left half is the lowest possible left-half weight for $R$, then $w_0$ will be assigned to the reducer with the same range for the right half, but the next lower range for the left half.  Therefore, to make sure that $w_0$ and $w_1$ share a reducer, we need to replicate $w_1$ at the neighboring reducer that handles $w_0$.  The same problem occurs if $w_0$ and $w_1$ differ in the right half, so any string whose right half has the lowest possible weight in its range also has to be replicated at a neighboring reducer.  We suggested in Fig.~\ref{weights-fig} how the strings with weights at the two lower borders of the ranges for a reducer need to be replicated at a neighboring reducer.

Now, let us analyze the situation, including the maximum number $q$ of inputs assigned to a reducer, and the replication rate.  For the bound on $q$, note that the vast majority of the bit strings of length $n$ have weight close to $n/2$.  The number of bit strings of weight exactly $n/2$ is $\binom{n}{n/2}$.  Stirling's approximation \cite{feller68} gives us $2^n/\sqrt{2\pi n}$ for this quantity.  That is, one in $O(\sqrt{n})$ of the strings have the average weight.

If we partition strings as suggested by Fig.~\ref{weights-fig}, then the most populous $k\times k$ cell, the one that contains strings with weight $b/4$ in the first half and also weight $b/4$ in the second half, will have no more than
$$k^2\Bigr(\frac{2^{b/2}}{\sqrt{2\pi(b/2)}}\Bigl)^2 = \frac{k^22^b}{\pi b}$$
strings assigned.\footnote{Note that many of the cells have many fewer strings assigned, and in fact a fraction close to 1 of the strings have weights within $\sqrt{b}$ of $b/4$ in both their left halves and right halves.  In a realistic implementation, we would probably want to combine the cells with relatively small population at a single reducer, in order to equalize the work at each reducer.}
If $k$ is a constant, then in terms of the horizontal axis in Fig.~\ref{tradeoff-fig}, this algorithm has $\log_2q$ equal to $b - \log_2b$ plus or minus a constant.  It is thus very close to the right end, but not exactly at the right end.

For the replication rate of the algorithm, if $k$ is a constant, then for any cell there is only a small ratio of variation between the numbers of strings with weights $i$ and $j$ in the left and right halves, for any $i$ and $j$ that are assigned to that cell.  Moreover, when we look at the total number of strings in the borders of all the cells, the differences average out so the total number of replicated strings is very close to $(2k)/k^2 = 2/k$.  That is, a string is replicated if either its left half has a weight divisible by $k$ or its right half does.   Note that strings in the lower-left corner of a cell are replicated twice, strings of the other $2k-2$ points on the border are replicated once, and the majority of strings are not replicated at all.  We conclude that the replication rate is $1+\frac{2}{k}$.

\subsection{Generalization to {\large \tt d} Dimensions}
\label{weights-d-subsect}

The algorithm of Section~\ref{weights-subsect} can be generalized from 2 dimensions to $d$.  Break bit strings of length $b$ into $d$ pieces of length $b/d$, where we assume $d$ divides $b$.  Each string of length $b$ can thus be assigned to a cell in a $d$-dimensional hypercube, based on the weights of each of its $d$ pieces.  Assume that each cell has side $k$ in each dimension, where $k$ is a constant that divides $b/d$.

The most populous cell will be the one that contains strings where each of its $d$ pieces has weight $b/(2d)$.  Again using Stirling's approximation, the number of strings assigned to this cell is
$$k^d\Bigl(\frac{2^{b/d}}{\sqrt{2\pi b/d}}\Bigr)^d = \frac{k^d2^b}{b^{d/2}(2\pi/d)^{d/2}}$$
On the assumption that $k$ is constant, the value of $\log_2q$ is
$$b-(d/2)\log_2b$$
plus or minus a constant.

To compute the replication rate, observe that every point on each of the $d$ faces of the hypercube that are at the low ends of their dimension must be replicated.  The number of points on one face is $k^{d-1}$, so the sum of the volumes of the faces is $dk^{d-1}$.  The entire volume of a cell is $k^d$, so the fraction of points that are replicated is $d/k$, and the replication rate is $1+d/k$.  Technically, we must prove that the points on the border of a cell have, on average, the same number of strings as other points in the cell.  As in Section~\ref{weights-subsect}, the border points in any dimension are those whose corresponding substring has a weight divisible by $k$.  As long as $k$ is much smaller than $b/d$, this number is close to $1/k$th of all the strings of that length.

\section{Triangle Finding}
\label{other-results}

In this section, we present a brief description of other results obtained using our framework, specifically on finding triangles.  The pattern that lets us investigate any problem  is, we hope, clear from the analysis of Section~\ref{hd1-sect}.

\begin{enumerate}

\item
Find an upper bound, $g(q)$, on the number of outputs a reducer can cover if $q$ is the number of inputs it is given.

\item
Count the total numbers of inputs $|I|$ and outputs $|O|$.

\item
Assume there are $p$ reducers, each receiving $q_i \leq q$ inputs and covering $g(q_i)$ outputs.  Together they cover all the outputs. That is $\sum_{i=1}^pg(q_i) \geq |O|$.

\item
Manipulate the inequality from (3) to get a lower bound on
the replication rate, which is $\sum_{i=1}^pq_i / |I|$.

\item
Hopefully, demonstrate that there are algorithms whose replication rate matches the formula from (4).

\end{enumerate}

\subsection{The Tradeoff}
\label{tri-subsect}

We shall briefly show how this method applies to the problem of finding triangles introduced in Example~\ref{triangle-ex}.  Suppose $n$ is the number of nodes of the input graph.  Following the outline just given:

\begin{enumerate}

\item
We claim that the largest number of outputs (triangles) a reducer with at most $q$ inputs occurs when the reducer is assigned all the edges running between some set of $k$ nodes.  This point was proved, to within an order of magnitude in \cite{Schank07}.  Suppose we assign to a reducer all the edges between a set of $k$ nodes.  Then there are $\binom{k}{2}$ edges assigned to this reducer, or approximately $k^2/2$ edges.  Since this quantity is $q$, we have $k = \sqrt{2q}$.
The number of triangles among $k$ nodes is $\binom{k}{3}$, or approximately $k^3/6$ outputs.  In terms of $q$, the upper bound on the number of outputs is $\frac{\sqrt{2}}{3} q^{3/2}$.

\item
The number of inputs is $\binom{n}{2}$ or approximately $n^2/2$.  The number of outputs is $\binom{n}{3}$, or approximately $n^3/6$.

\item
So using the formulas from (1) and (2), if there are p reducers each with $\leq q$ inputs:
$\sum_{i=1}^p\frac{\sqrt{2}}{3} q_i^{3/2} \ge n^3/6$, which implies that $\sum_{i=1}^p\frac{\sqrt{2}}{3} q_iq^{1/2} \ge n^3/6$.

\item
The replication rate is $\sum_{i=1}^pq_i$ divided by the number of inputs, which is $n^2/2$ from (1).  We can manipulate the inequality from (3) to get
$$r = \frac{2\sum_{i=1}^p q_i}{n^2} \ge \frac{n}{\sqrt{2q}}$$

\item
There are known algorithms that, to within a constant factor, match the lower bound on replication rate.  See \cite{SV11} and \cite{AFU12}.\footnote{It is a little tricky to relate these algorithms to the bound, since those algorithms assume the actual data graphs are sparse and calculate replication and input sizes in terms of the number of edges rather than nodes.  However, on randomly chosen subsets of all possible edges, they do get us within a constant factor of the lower bound.}

\end{enumerate}

{\bf Generalizing to multiway joins} Finding triangles is equivalent to computing
the multiway join $E(A,B) \& E(B,C) \& E(C,A)$.
Similar techniques can be used to compute lower and upper bounds for any multiway join.
In particular, in the case where we have one relation of arity $a$ and the multiway join uses $m$ variables then we get lower and upper bounds that are both
$O(q^{1-m/a}n^{m-a})$.


\section{Summary}
\label{summary-sect}

This abstract introduced a simple model for defining map-reduce problems, enabling us to study their ``distributability'' properties. We studied the notion of {\em replication rate}, which is closely related communication cost, and the number of machines available for mappers and reducers. We showed that our model effectively captures a multitude of map-reduce problems, and is a natural formalism for the study of replication rate. We presented a detailed treatment of the hamming-distance-1 problem, providing tight bounds on the replication rate. We also presented a summary of some other results on multiway joins and triangle finding we have obtained.

We believe that our formalism presents a new direction for the study of a large class of map-reduce problems, and allows us to prove results on the limits of map-reducibility for any algorithm for a problem that fits our model.


{\footnotesize
\bibliographystyle{plain}
\bibliography{bib}

\begin{thebibliography}{1}

\bibitem{AFU12}
Foto~N. Afrati, Dimitris Fotakis, and Jeffrey~D. Ullman.
\newblock Enumerating subgraph instances using map-reduce.
\newblock Technical report, January 2012.
\newblock Available at http://ilpubs.stanford.edu:8090/1020/.

\bibitem{ADMPU12}
Foto~N. Afrati, Anish~Das Sarma, David Menestrina, Aditya Parameswaran, and
  Jeffrey~D. Ullman.
\newblock Fuzzy joins using mapreduce.
\newblock In {\em ICDE '12}.

\bibitem{TTLKF11}
Robson Leonardo~Ferreira Cordeiro, Caetano~Traina Jr., Agma Juci~Machado
  Traina, Julio L{\'o}pez, U.~Kang, and Christos Faloutsos.
\newblock Clustering very large multi-dimensional datasets with mapreduce.
\newblock In {\em KDD}, 2011.

\bibitem{feller68}
W.~Feller.
\newblock {\em An Introduction to Probability Theory and Its Applications}.
\newblock Vol. 1, 3rd ed. New York: Wiley, 1968.
\newblock (Stirling's Formula in Section 2.9).

\bibitem{Schank07}
T.~Schank.
\newblock {\em Algorithmic Aspects of Triangle-Based Network}.
\newblock University of Karlsruhe (TH), 2007.

\bibitem{SV11}
Siddharth Suri and Sergei Vassilvitskii.
\newblock Counting triangles and the curse of the last reducer.
\newblock In {\em WWW '11}.

\bibitem{provenance}
W-C. Tan.
\newblock {Provenance in Databases: Past, Current, and Future}.
\newblock {\em IEEE Data Engineering Bulletin}, 2008.

\bibitem{VCL10}
Rares Vernica, Michael~J. Carey, and Chen Li.
\newblock Efficient parallel set-similarity joins using mapreduce.
\newblock In {\em SIGMOD Conference}, 2010.

\end{thebibliography}
}


\clearpage

\end{document}